\newtheorem{lemma}{Lemma}
\DeclareMathOperator*{\argmax}{\arg\!\max}
\newcommand{\cross}{%
  \begin{tikzpicture}[scale=0.25]
    \draw (0,0) -- (1,1); 
    \draw (0,1) -- (1,0); 
  \end{tikzpicture}%
}
\title{\LARGE \bf
Multi-Agent Vulcan: An Information-Driven Multi-Agent Path Finding Approach
}
\author{Jake Olkin$^{*1}$, Viraj Parimi$^{*1}$ and Brian Williams $^{1}$%
\thanks{This work was supported by the BP Corporation}%
\thanks{$^{1}$ Computer Science and Artificial Intelligence Laboratory, Massachusetts Institute of Technology, Cambridge, MA 01239. Corresponding at {\tt\footnotesize \{jolkin,vparimi,williams\}@mit.edu}. *These authors contributed equally to the paper.}%
}
\begin{document}

\maketitle
\thispagestyle{empty}
\pagestyle{empty}

\begin{abstract}
Scientists often search for phenomenon of interest while exploring new environments. Autonomous vehicles are deployed to explore such areas where human-operated vehicles would be costly or dangerous. Online control of autonomous vehicles for information-gathering is called adaptive sampling and can be framed as a Partially Observable Markov Decision Process (POMDPs) that uses information gain as its principal objective. While prior work focuses largely on single-agent scenarios, this paper confronts challenges unique to multi-agent adaptive sampling, such as avoiding redundant observations, preventing vehicle collision, and facilitating path planning under limited communication. We start with Multi-Agent Path Finding (MAPF) methods, which address collision avoidance by decomposing the multi-agent path planning problem into a series of single-agent path planning problems. We present an extension to these methods called information-driven MAPF which addresses multi-agent information gain under limited communication. First, we introduce an admissible heuristic that relaxes mutual information gain to an additive function that can be evaluated as a set of independent single agent path planning problems. Second, we extend our approach to a distributed system that is robust to limited communication. When all agents are in range, the group plans jointly to maximize information. When some agents move out of range, communicating subgroups are formed and the subgroups plan independently. Since redundant observations are less likely when vehicles are far apart, this approach only incurs a small loss in information gain, resulting in an approach that gracefully transitions from full to partial communication. We evaluate our method against other adaptive sampling strategies across various scenarios, including real-world robotic applications. Our method was able to locate up to 200\% more unique phenomena in certain scenarios, and each agent located its first unique phenomenon faster by up to 50\%.

%



\end{abstract}

\section{Introduction}

Adaptive sampling methods have been applied to the task of locating phenomena of interest \cite{ayton2017}. These methods frame the problem as maximizing the mutual information gain as reward in a Partially Observable Markov Decision Process (POMDP). State-of-the-art approaches address this problem as a single-agent formulation, however they do not explore multi-agent scenarios to the same fidelity. 

The multi-agent extension has additional requirements to ensure efficient sampling over the single-agent version. First, we must ensure that agents gather mutually informative observations. Multiple agents observing the same area leads to ineffective exploration when the environment is static. Second, agents must plan their paths without constant communication. Finally, agents must plan conflict-free paths to avoid collisions with each other.

In Multi-Agent Path Finding (MAPF), algorithms focus on the problem of planning conflict-free paths for multiple agents from their start locations to their goals. A fundamental strategy in MAPF involves decoupling multi-agent path planning using individual single-agent path planners, then identifying and resolving conflicts through Conflict-Based Search. While we do have a coupled, multi-agent path planning problem like MAPF, the multi-agent POMDP is coupled through the reward function, as opposed to collision conflicts. This is because the reward from each agent's observations depends on the observations from other agents. Further, MAPF techniques are ill-equipped to solve adaptive sampling problems as these techniques require additional goal specification beyond a reward function. However, we still draw inspiration from the MAPF approach by introducing a decoupled, admissible heuristic. Building from this heuristic design, we propose a method to efficiently solve the coupled multi-agent POMDP problem. We show that this heuristic guides our search over the multi-agent POMDP without the need to calculate our computationally demanding reward. This enables coordinated actions among agents to optimize collective information gain. 


Additionally, we demonstrate that, if we enforce constraints on the range of communication, we can operate in a distributed manner without the requirement for a central computing node. We achieve this by solving the coupled multi-agent planning problem whenever agents are within communication range of each other, and otherwise employ a single-agent forward search procedure that runs independent of other agents. This results in a near-optimal solution because agents often make redundant observations while they are near each other, and once two agents enter communication range, by exchanging all previous observations they will not return to areas that they had previously observed.

Current state-of-the-art approaches in multi-agent adaptive sampling do not use an information-driven POMDP formulation, which is crucial for modelling stochastic observations and the coupled nature of the reward function. Unlike multi-agent reinforcement learning-based methods which treat the reward as a deterministic function that can be calculated in a decoupled manner \cite{multi-rl-infomapf, proteins, MAAS}, the information-driven POMDP approach can account for the fact that an individual's agent's observations are only valuable if they are not redundant. Other strategies assume an intermediate model that can be updated in a decoupled manner, which makes the computation of the reward function more efficient \cite{bone2023decentralised, multi-agent-rrt, dec-pomdp}. While these mimic some types of adaptive sampling scenarios, past work has shown that the mutual information gain objective is ideal for the problem of locating phenomena of interest \cite{near-optimal-sampling, only-2-agents, single-agent-boat}. 

Alternative approaches have utilized a Monte-Carlo Tree Search (MCTS) approach to the information gathering problem, both in the single-agent case \cite{Ayton2017RiskboundedAI} and the multi-agent case \cite{bone2023decentralised}. MCTS approximates information yield by sampling potential paths and observations available to the agents. However, multiple agents introduces both a exponentially large state space, as well as multiple local minima and maxima as different routes for agents can yield similar information gains, potentially causing MCTS to miss paths closer to the optimum due to the limited sample size.


To give an overview of the rest of the paper, first in Section \ref{Adaptive Search} we define the multi-agent adaptive search problem.  In Section \ref{Methodology}, we describe our solution in two parts: the action loop for the agents, and a description of the search algorithm used to perform multi-agent search. This is found in Section \ref{multi-agent-search} which includes a proof of our approach's correctness through the lens of heuristic search. Lastly, 
 in Section \ref{experiments} we present the experiments that we have run comparing our algorithm to similar information-driven search techniques.

\section{Adaptive Search} \label{Adaptive Search}

At a high level, we address the problem of multiple autonomous agents travelling in an environment to maximize the number of detected phenomena of interest over a fixed mission duration. We assume that the agents can communicate without loss of information when they are within a range $r$ of each other or there is no communication between the agents otherwise. This limited communication paradigm divides our problem into two distinct modes: planning for the agents when they operate independently and planning for agents when they can communicate. When the agents operate outside the communication range, no additional coordination is necessary. Hence, our novelty lies in handling scenarios where an agent operates while communicating with other agent(s) in order to tackle the redundancy problem mentioned prior.

The concept of adaptive search operates under the premise that the mission duration is insufficient for a comprehensive exploration of the environment. Therefore, it is crucial for agents to gather information through measurements and utilize these findings to inform their future actions. In a multi-agent scenario with limited communication capabilities, agents should harness the measurements obtained by their counterparts to swiftly identify and disregard unpromising regions, while directing their focus towards exploring promising areas in detail. 

For the purpose of notation, we denote any random variable as $X$, with a specific value indicated by $x$. Additionally, superscript notation signifies time, while subscript notation signifies agents or locations. Therefore, $X_{i,j}^t$ represents a random variable associated with location $i$ and agent $j$ at time step $t$.








\subsection{Environment Structure}

We use similar environment structure $\mathcal{E}$ as in \cite{ayton2017}, where we model the presence of a target phenomenon at each location as a distinct discrete random variable $X_i \; \forall \; i \in [1, n]$ where $n$ is the number of distinct discrete locations in the environment. For each location, we have a random variable $U_i$, which represents whether the agent detects a feature associated with the phenomenon at location $i$. We assume that $X_i$ is conditionally independent of all phenomenons at other locations given the associated cell's feature $U_i$ thereby forming a Markov Random Field (MRF). Further, we use $Y_i$ as the noisy counterpart of the feature random variable $U_i$ and represent the underlying MRF between the features using a gaussian process $\mathcal{GP}(m(x), k(x,x'))$ where $m(x)$ is the mean function and $k(x, x')$ is the kernel function of the gaussian process. 

\subsection{MA-POMDP formulation}

Given the discretized environment structure $\mathcal{E}$ and building upon \cite{ayton2017}, we formulate our problem as a discrete finite-horizon POMDP $M_j$ for each agent $a_j$. This POMDP is defined by a 8-tuple $\{\mathcal{S}_j, \mathcal{A}_j, \mathcal{T}_j, \Omega_j, \mathcal{O}_j, R, \gamma, \delta\}$. Each observation taken by agent $a_j$ at location $i$ up to time step $t$ is denoted as $y_j^{0:t}$. The state space of agent $j$, $s_j \in \mathcal{S}_j$, is formed by combining the observation, feature probability function ($\lbrace p(u_i \mid y_j^{0:t}) \rbrace_{i=1}^{n}$), and phenomenon probability function ($\lbrace p(x_i \mid y_j^{0:t}) \rbrace_{i=1}^{n}$). The action space $\mathcal{A}_j$ consists of discrete movements such as up, down, left, right, or idle at the current location. However, given the objective of exploring the environment within a limited mission duration, we only consider the idle action when other actions are infeasible. Additionally, $\mathcal{T}_j: \mathcal{S}_j \times \mathcal{A}_j \rightarrow \mathcal{S}_j$ represents a deterministic transition function. The observation space $\Omega_j$ is continuous, while $\mathcal{O}_j : \mathcal{S}_j \times \mathcal{A}_j \times \Omega_j \rightarrow [0, 1]$ represents the observation probability function which is defined as the distribution over $Y_i$ at a location $i$. The reward function $R$ is designed to maximize information gain. Further, since we are dealing with a finite-horizon mission, we set $\gamma = 1$ to emphasize the importance of identifying phenomena of interest throughout the planning horizon $\delta$. Finally, for the sake of brevity we will represent the timesteps associated with the planning horizon $\delta$ i.e $t+1:t+\delta$ be represented by $\tau$.

\subsection{Reward Function}

Our reward function is inspired by \cite{ayton2017} Sections 4.4 and 4.5 and is defined as,

\begin{align*}
    R &= I(\lbrace X_i \rbrace_{i=1}^n ; Y_j^{\tau} \mid y_j^{0:t})
    \approx \sum_{i=1}^n I(X_i ; Y_j^{\tau} \mid y_j^{0:t})
\end{align*}
for an agent $a_j$ over a planning horizon $\delta$. The information objective is defined as, 

\begin{align*}
    I(X_i ; Y_j^{\tau} \mid y_j^{0:t}) &= \mathbb{E}_{Y_j^{\tau}} \left[ D_{\text{KL}}(p_{X_i \mid Y_j^{\tau}, y_j^{0:t}} \Vert p_{X_i \mid y_j^{0:t}}) \right]
\end{align*}

The phenomenon probability function is defined as, 

\begin{align*}
    p(X_i = 1 \mid y_j^{0:t}) &= \frac{P_1}{2} \left( 1 - \text{erf}\left(\frac{\tilde{u} - \mu}{\sqrt{2\Sigma}} \right) \right) \\
    &+ \frac{P_2}{2} \left( 1 + \text{erf}\left(\frac{\tilde{u} - \mu}{\sqrt{2 \Sigma}} \right) \right)
\end{align*}
where $P_1, P_2$ and $\tilde{u}$ are user-defined parameters. Here $\mu$ and $\Sigma$ define the posterior distribution of $U_i$ given the observation history $y_j^{0:t}$ as governed by the underlying gaussian process forming the feature probability function. To compute the expectations we use the 5th order Gauss-Hermite quadrature.


To give intuition about the shape and effect of the parameters of the phenomenon probability function, $\tilde{u}$ is treated as a threshold for how confident we must be in our observation of the phenomenon to treat the phenomenon as more likely present than not, and then $P_1$ and $P_2$ are weights assigned to give more credence to the case when the measurement value is above or below the threshold respectively. This is visualized in Figure \ref{fig:reward-intuition} from \cite{ayton2017}.

\begin{figure}
    \centering
    \includegraphics[width=\linewidth]{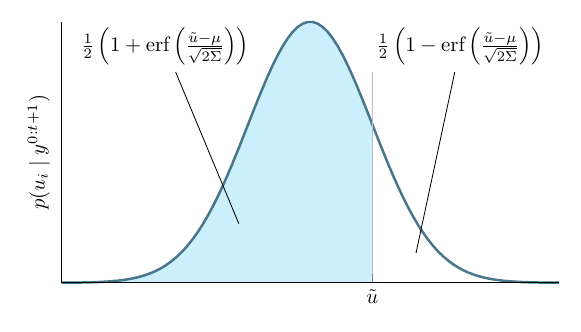}
    \caption{Probability phenomenon function for $P_1 = P_2 = 0.5$ to demonstrate the effect of $\tilde{u}$ on the posterior distribution}
    \label{fig:reward-intuition}
\end{figure}



This reward function uses the mutual information between the random variables associated with the target phenomenon of interest and the observations made by the agents. That is to say, we are incentivizing the agents to take actions so that resultant observations raises their localization confidence of the target phenomena. 

\section{Methodology} \label{Methodology}


Our approach as outlined in Algorithm \ref{algo:overview} builds on the idea of distributed and online execution. Given an environment structure $\mathcal{E}$, a set of agents $A$ where each agent $a_j$ is governed by their independent POMDP $M_j$ and time $t$ since the mission began, our approach first identifies the set of agents that are nearby a specific agent This implies that for each agent $a_i$ we identify a subset of agents $N_i$ who lie within the communication range $r$ by computing the Manhattan distance $d$ between their current positions (lines 2-3). However, note that an agent can be part of multiple other agents' neighborhood set. To circumvent the problem of duplicating path planning efforts for such agents, we identify the minimal disjoint sets given the agent neighborhoods (lines 4-5) where each subset $\lambda_k$ represents the agents that are within the communication range $r$ of each other. Consequently, for each minimal disjoint set $\lambda_k$ we instantiate a \textproc{Multi-Agent Search} that implements A* search over the joint state space of the agents within that set (lines 5-6). For the agents that are not in communication range $r$ of any other agent in the map, we leverage a forward search idea inspired by \cite{Ayton2017RiskboundedAI} to plan their paths (lines 7-8). After extracting the immediate actions of all agents, we execute them and collect new observations to inform future planning efforts (lines 9-10). Note that over time, $\Lambda$ evolves which implies that if $\lambda_k = \{a_1, a_2\}$ at timestep $t$, the agents $a_1$ and $a_2$ may drift apart in the next timestep $t+1$ collapsing $\lambda_k$. In such a scenario, agent $a_1$ and $a_2$ will perform single-agent search from timestep $t+1$ wherein each agent can utilize the other agents' observations up until timestep $t$. However, subsequent observations from timestep $t+1$ will not be shared between the agents $a_1$ and $a_2$ as they would be expected to plan their paths independently.

\begin{algorithm}
    \caption{High-level overview of the approach}
    \hspace*{\algorithmicindent} \textbf{Input} 
        \begin{tabular}[t]{ll}
          Environment $\mathcal{E}$, Agents $A = \{a_1, \ldots, a_k\}$ \\
          Mission Duration $H$, Communication Range $r$
        \end{tabular} \qquad
    
    \begin{algorithmic}[1]
        \While {$t \leq H$}
            \ForAll {$a_i \in A$}
                \State $N_i \gets \{a_j \mid d(a_i, a_j) \leq r \}$
            \EndFor
            \State $\Lambda \gets $ Extract minimal disjoint sets from \par
                   \hskip\algorithmicindent $\{N_i \mid i \in \{1, \ldots k\}\}$
            \ForAll {$\lambda_k \in \Lambda$}
                \State $\Pi_{\lambda_k} \gets $ \textproc{Multi-Agent Search}$(\lambda_k, \mathcal{E})$
            \EndFor
            \ForAll {$a_k \notin \Lambda$}
                \State $\Pi_{a_k} \gets $ \textproc{Single-Agent Search}$(a_k, \mathcal{E})$
            \EndFor
            \ForAll {$a_i \in A$}
                \State Execute $\Pi_{a_i}$ and collect observation $\omega_{a_i}$
            \EndFor
            \State $t \gets t + 1$
        \EndWhile
    \end{algorithmic}
    \label{algo:overview}
\end{algorithm}


\subsection{Multi-Agent Search} \label{multi-agent-search}


When two or more agents come within communication range of each other, we form a corresponding \textit{bubble} $\lambda_k$. An agent $a_k \in \lambda_k$ becomes the lead actor who instantiates the multi-agent search process for the agents in the \textit{bubble}. This process performs an informed A* search over the joint state space of these agents and generates viable actions for each of them. To do this, we form a new state $\tilde{s} = \{ \tilde{y}^{0:t}, \lbrace p(u_i \mid \tilde{y}^{0:t}) \rbrace_{i=1}^{n},  \lbrace p(x_i \mid \tilde{y}^{0:t}) \rbrace_{i=1}^{n} \}$ where $\tilde{y}^{0:t} = \{y_j^{0:t} \mid a_j \in \lambda_k \}$ represents the combined observation history of the agents in the \textit{bubble}. Further, the action space for this search process is represented by $\cross_{a_j \in \lambda_k} \mathcal{A}_j$. The frontier states of our A* search are ordered by the priority function $f(s) = g(s) + h(s)$. Here $g(s)$ represents the expected information gain between the phenomenon of interest and the joint distribution of the observations made by the agents in the \textit{bubble} up to the planning horizon $\delta$ conditioned on the combined observation history of these agents. More specifically let $Y_{\lambda_k}^{\tau}$ represent the random variables associated with the joint distribution of the observations made by the agents up to the planning horizon, then the $g$-function is defined as follows:
\begin{dmath*}
    g(s) = \sum_{i=1}^n I{(X_i; Y_{\lambda_k}^{\tau} \mid \tilde{y}^{0:t})} 
\end{dmath*}
    
$h(s)$ represents the optimistic admissible heuristic function defined as the sum of the maximum mutual information gain between the phenomenon of interest and the distribution of observations made by that agent acting independently up to the planning horizon $\delta$ conditioned on the combined observation history of the agents in the \textit{bubble}. More specifically, let 
\begin{dmath*}
    h_j^{\tau} = \sum_{i=1}^n I{(X_i; Y_j^{\tau} \mid \tilde{y}^{0:t})} 
\end{dmath*}
represent the expected mutual information gain between the phenomenon of interest and the distribution over observations taken by agent $a_j$ up to the planning horizon, then 
\[
    h(s) = \sum_{a_j \in \lambda_k} h_j^{\tau}
\]

To ensure that the A* search returns optimal paths for the agents within $\lambda_k$, we need to ensure that the $h$-function is admissible. Since we are interested in a receding horizon plan, we search over a tree where an admissible heuristic is sufficient to ensure optimality of the A* search process. Our heuristic is provably admissible for environments where it is known that the target phenomenon $X_i$ is a direct cause of the (noisy) observation $Y_i$. To demonstrate that our heuristic is admissible, we must show that our heuristic is an optimistic estimate of the reward we could receive starting from any given state. Or, more specifically for our scenario, we must show that the maximum, multi-agent information gain from a given state will always be less than or equal to our heuristic estimate for that state. 

\begin{lemma}
Given the definitions of $g(s)$ and $h(s)$, $h(s) \geq g(s)$    
\end{lemma}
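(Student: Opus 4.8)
The plan is to reduce the claim to one subadditivity statement about mutual information and then sum it over locations. Both $g(s)$ and $h(s)$ are sums of per-location conditional mutual informations, so it suffices to show that for every location $i \in [1,n]$,
\[
I\!\left(X_i; Y_{\lambda_k}^{\tau} \mid \tilde{y}^{0:t}\right) \;\le\; \sum_{a_j \in \lambda_k} I\!\left(X_i; Y_j^{\tau} \mid \tilde{y}^{0:t}\right),
\]
after which summing over $i$ and exchanging the order of summation on the right recovers $\sum_{a_j \in \lambda_k} h_j^{\tau} = h(s)$, while the left side sums to $g(s)$. (If $h_j^{\tau}$ is read as the maximum of $\sum_i I(X_i; Y_j^{\tau}\mid\tilde{y}^{0:t})$ over agent $a_j$'s feasible continuations, the inequality only loosens, so it is enough to treat the version with a fixed per-agent plan.) Thus the lemma is equivalent to the assertion that the joint information several agents carry about a fixed target $X_i$ is at most the sum of the informations they carry individually.

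To prove that per-location bound, fix $i$, order the agents in $\lambda_k$ as $a_1,\dots,a_K$, and write $Y_{<m}^{\tau} = (Y_1^{\tau},\dots,Y_{m-1}^{\tau})$. First I would apply the chain rule for conditional mutual information to peel off one agent at a time, obtaining $I(X_i; Y_{\lambda_k}^{\tau} \mid \tilde{y}^{0:t}) = \sum_{m=1}^{K} I(X_i; Y_m^{\tau} \mid Y_{<m}^{\tau}, \tilde{y}^{0:t})$. The core step is then to bound each summand by its unconditioned counterpart, $I(X_i; Y_m^{\tau}\mid Y_{<m}^{\tau}, \tilde{y}^{0:t}) \le I(X_i; Y_m^{\tau}\mid \tilde{y}^{0:t})$. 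Expanding $I(X_i, Y_{<m}^{\tau}; Y_m^{\tau}\mid \tilde{y}^{0:t})$ with the chain rule in its two possible orders and equating the results gives
\[
I\!\left(X_i; Y_m^{\tau}\mid Y_{<m}^{\tau}, \tilde{y}^{0:t}\right) = I\!\left(X_i; Y_m^{\tau}\mid \tilde{y}^{0:t}\right) + I\!\left(Y_{<m}^{\tau}; Y_m^{\tau}\mid X_i, \tilde{y}^{0:t}\right) - I\!\left(Y_{<m}^{\tau}; Y_m^{\tau}\mid \tilde{y}^{0:t}\right).
\]
Here the hypothesis that the target phenomenon $X_i$ is the (only) direct cause of each noisy observation of location $i$ is exactly what is needed: it makes the observation streams of distinct agents conditionally independent given $X_i$ and the shared history, so the middle term $I(Y_{<m}^{\tau}; Y_m^{\tau}\mid X_i, \tilde{y}^{0:t})$ vanishes, while the last term is nonnegative. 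Hence each summand is at most $I(X_i; Y_m^{\tau}\mid \tilde{y}^{0:t})$; summing over $m$ yields the per-location inequality, and summing over $i$ gives $h(s) \ge g(s)$.

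The main obstacle is the conditional-independence claim $Y_{<m}^{\tau} \perp Y_m^{\tau} \mid X_i, \tilde{y}^{0:t}$, which is where the \emph{direct cause} assumption does real work and must be argued rather than asserted. I would make it precise with a d-separation argument in the graphical model of Section~\ref{Adaptive Search}: under that assumption every active path connecting an observation made by agent $a_m$ to an earlier agent's observation that is informative about $X_i$ passes through $X_i$, so conditioning on $X_i$ blocks it. It is worth flagging in the write-up that this is a genuine restriction — if couplings between locations remain active after conditioning on $X_i$ (for instance the full Gaussian-process prior linking the features $U_i$), the correction term need not be zero and the heuristic may lose admissibility, which is precisely why the result is scoped to environments where $X_i$ is a direct cause of $Y_i$. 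I would therefore state the conditional-independence structure as an explicit standing hypothesis and let the d-separation verification be the technical heart of the proof.
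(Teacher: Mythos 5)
Your proposal is correct and follows essentially the same route as the paper: the same chain-rule peeling of $I(X_i; Y_{\lambda_k}^{\tau}\mid\tilde{y}^{0:t})$ into per-agent conditional terms, followed by the same three-term identity relating each conditional term to its unconditioned counterpart. The only difference is the final step, where you make the correction term vanish by asserting conditional independence of the agents' observation streams given $X_i$ (plus nonnegativity of $I(Y_{<m}^{\tau};Y_m^{\tau}\mid\tilde{y}^{0:t})$), whereas the paper argues $I(Y_{\lambda_{k,1}}^{\tau};Y_{\lambda_{k,2}}^{\tau}\mid X_i)\leq I(Y_{\lambda_{k,1}}^{\tau};Y_{\lambda_{k,2}}^{\tau})$ from the common-cause assumption; under the assumed fork structure these amount to the same thing, and your caveat about the Gaussian-process coupling across locations applies equally to the paper's own argument.
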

\begin{proof}
    Expanding the $g$-function for a given \textit{bubble} $\lambda_k$,
    \begin{dgroup*}
        \begin{dmath*} 
            g(s) = \sum_{i=1}^n I{(X_i; Y_{\lambda_k}^{\tau} \mid \tilde{y}^{0:t})}
            = \sum_{i=1}^n \left[ I{(X_i; Y_{\lambda_{k, 1}}^{\tau} \mid \tilde{y}^{0:t})} + I{(X_i; Y_{\lambda_{k, 2}}^{\tau} \mid Y_{\lambda_{k, 1}}^{\tau}, \tilde{y}^{0:t})} + \ldots + I{(X_i; Y_{\lambda_{k, m}}^{\tau} \mid (Y_{\lambda_{k, 1}}^{\tau}, \ldots Y_{\lambda_{k, m-1}}^{\tau}), \tilde{y}^{0:t})} \right]
            = h_1^{\tau} + \sum_{i=1}^n \left[ I(X_i; Y_{\lambda_{k, 2}}^{\tau} | Y_{\lambda_{k, 1}}^{\tau}, \tilde{y}^{0:t}) + \ldots + I(X_i; Y_{\lambda_{k, m}}^{\tau} | (Y_{\lambda_{k, 1}}^{\tau}, \ldots Y_{\lambda_{k, m-1}}^{\tau}), \tilde{y}^{0:t}) \right]
        \end{dmath*}
    \end{dgroup*}

Here, $Y_{\lambda_{k, j}}^{\tau}$ represents the random variable associated with the observations of agent $a_j \in \lambda_k$ up to the planning horizon $\delta$ and $m = |\lambda_k|$. Comparing terms between this and the $h$-function defined earlier, we observe that the first term cancels out. To establish the required relationship between the $g$ and $h$-functions, it is enough to  show that $\forall j \in [2, m], h_j^{\tau} \geq \sum_{i=1}^n I(X_i; Y_{\lambda_{k, j}}^{\tau} \mid (Y_{\lambda_{k, 1}}^{t+1} ... Y_{\lambda_{k, j-1}}^{t+1}), \tilde{y}^{0:t})$. Without loss of generality, for $j=2$ we need to show that $h_2^{\tau} \geq \sum_{i=1}^n I(X_i; Y_{\lambda_{k, 2}}^{\tau} \mid Y_{\lambda_{k, 1}}^{\tau}, \tilde{y}^{0:t})$. Examining $I(X_i; Y_{\lambda_{k, 2}}^{\tau} \mid Y_{\lambda_{k, 1}}^{\tau})$ while omitting $\tilde{y}^{0:t}$ for brevity we observe that,


\begin{dgroup*}
    \begin{dmath*}
        I{(X_i ; Y_{\lambda_{k, 2}}^{\tau} \mid Y_{\lambda_{k, 1}}^{\tau})} = I{(Y_{\lambda_{k, 2}}^{\tau}; X_i \mid Y_{\lambda_{k, 1}}^{\tau})}
        = I{(X_i ; Y_{\lambda_{k, 2}}^{\tau})} - I{(Y_{\lambda_{k, 2}}^{\tau}; Y_{\lambda_{k, 1}}^{\tau})} + I{(Y_{\lambda_{k, 2}}^{\tau}; Y_{\lambda_{k, 1}}^{\tau} \mid X_i)} 
    \end{dmath*}
\end{dgroup*}
where we use the symmetry of mutual information along with the chain-rule for conditional mutual information. This implies that,
\begin{align*}
    I(X_i ; Y_{\lambda_{k, 2}}^{\tau}) &= I{(X_i; Y_{\lambda_{k, 2}}^{\tau} \mid Y_{\lambda_{k, 1}}^{\tau})} \\
    &+ I{(Y_{\lambda_{k, 2}}^{\tau}; Y_{\lambda_{k, 1}}^{\tau})} - I{(Y_{\lambda_{k, 2}}^{\tau}; Y_{\lambda_{k, 1}}^{\tau} \mid X_i)}
\end{align*}

Under the MRF describing our environment structure, we model $Y_{\lambda_{k, 1}}^{\tau}$ and $Y_{\lambda_{k, 2}}^{\tau}$ as being caused by the respective phenomenon random variables present at their respective observation locations. These observations are collected nearby each other by virtue of the agents being within the communication radius which implies that they will correlate with $X_i$. Based on these criteria we can conclude that $X_i$ can be considered as the common cause of $Y_{\lambda_{k, 1}}^{\tau}$ and $Y_{\lambda_{k, 2}}^{\tau}$. Since we know that mutual information between two random variables $P$ and $R$ decreases when it is conditioned on another random variable $Q$ where $Q$ is the common cause of both $P$ and $R$, we can state that $I(Y_{\lambda_{k, 2}}^{\tau}; Y_{\lambda_{k, 1}}^{\tau}) \geq I(Y_{\lambda_{k, 2}}^{\tau} ; Y_{\lambda_{k, 1}}^{\tau} \mid X_i)$. Thus, this means that $I(Y_{\lambda_{k, 2}}^{\tau}; Y_{\lambda_{k, 1}}^{\tau}) - I(Y_{\lambda_{k, 2}}^{\tau}; Y_{\lambda_{k, 1}}^{\tau} \mid X_i) \geq 0$, and therefore $I(X_i; Y_{\lambda_{k, 2}}^{\tau} \mid Y_{\lambda_{k, 1}}^{\tau}) \leq I(X_i; Y_{\lambda_{k, 2}}^{\tau})$. This proves our intermediate objective of showing $h_2^{\tau} \geq \sum_{i=1}^n I(X_i; Y_{\lambda_{k, 2}}^{\tau} \mid Y_{\lambda_{k, 1}}^{\tau}, \tilde{y}^{0:t})$ where $h_2^{\tau} = \sum_{i=1}^n I(X_i ; Y_2^{\tau} \mid \tilde{y}^{0:t})$. Note that $\lambda_{k, 2}^{\tau} = \lambda_2^{\tau}$. Extending this reasoning over $j \in [2, m]$ we can see that $h(s) \geq g(s)$.



\end{proof}










With the admissibility of our heuristic function, we ensure optimal path generation for the agents inside a \textit{bubble} $\lambda_k$. However, the key challenge we encounter is that computation of the multi-agent information gain $g$ is very compute intensive because it requires iterating through all different combinations of potential observations for agents inside \textit{bubble} over the planning horizon $\delta$. Additionally, as our actions space grows exponentially with the number of agents inside the \textit{bubble}, we generate a larger number of states for every search state that we choose to expand. To mitigate this issue, we leverage our optimistic heuristic computations to ignore states that will never be expanded. 

To compute $h(s)$ for any given state $s$, we compute the maximum information gain we can receive from taking each action from $s$, effectively computing $h(c) \; \forall c \in \mathcal{C}$ where $\mathcal{C}$ is the set of the children that one can reach from the state $s$ according to the $\cross_{a_j \in \lambda_k} \mathcal{A}_j$. We use these $h(c)$ values to order the children in $s$ for generation. Let $\tilde{\mathcal{C}} \subseteq \mathcal{C}$ be the set of states for which we have computed $g(c)$. We continue to calculate $g(c)$ for $\argmax_{ c \in \mathcal{C} \setminus \tilde{\mathcal{C}}} h(c)$ until $g(s) + \max_{ c \in \mathcal{C} \setminus \tilde{\mathcal{C}}} h(c) < \max_{c \in \tilde{\mathcal{C}}} g(c)$. At this point, we know that the maximum information gain we could receive from the remaining states will be less than the information gain we can guarantee from taking a different action, and therefore will never need to be expanded. Using this observation we can reduce the number of times we compute $g(s)$ for any given state $s$.


The \textproc{Multi-Agent Search} algorithm presented in Algorithm \ref{algo:multi} operates by starting with a \textit{bubble} $\lambda_k$ and the environment $\mathcal{E}$. It aggregates observations within the \textit{bubble} to form a new state $\tilde{s}$ by concatenating the observations from all agents within the bubble, computing both $g$ and $h$ values for it (lines 1-2). This state is added to an open list $\mathcal{Q}$, and variables for tracking the highest information gain $I^*$ and corresponding best actions $\pi^*$ are initialized (lines 3-4). The algorithm proceeds in an A* manner, selecting and removing states from the open list based on their $f$-value, aiming to maximize information gain (line 6). If a state's $f$-value doesn't surpass the current maximum gain $I^*$, the algorithm concludes, returning the optimal actions $\pi^*$ found (lines 7-8). Otherwise, for states at the planning horizon $\delta$ with higher $f$-values, it updates the maximum gain $I^*$ and actions $\pi^*$ (lines 9-11). For states not at the planning horizon $\delta$, it evaluates their descendants in order. Note that, when computing the $h$-value of a state, we also compute the information gain received from every child reachable from that state which allows us to order these children (line 13). If a descendant's \textit{optimistic} $f$-value is higher than the current maximum gain (line 15), the descendant is added to the open list for further consideration (lines 16-17). Finally we repeat this process until the open list is exhausted. Note that $t(s)$ returns the timestep of the state $s$.

\begin{algorithm}
    \caption{\textproc{Multi-Agent Search}}
    \label{algo:multi}
    \hspace*{\algorithmicindent} \textbf{Input} 
        \begin{tabular}[t]{ll}
          Agent Bubble $\lambda_k = \{a_1, ... a_m \mid d(a_i, a_j) < r\}$ \\ Environment $\mathcal{E}$
        \end{tabular} \qquad
    \begin{algorithmic}[1]
        \State $\tilde{y}^{0:t} \gets \{y^{0:t}_{a_i} \mid a_i \in \lambda_k\}$
        \State $\tilde{s} \gets \{ \tilde{y}^{0:t}, \lbrace p(u_i \mid \tilde{y}^{0:t}) \rbrace_{i=1}^{n},  \lbrace p(x_i \mid \tilde{y}^{0:t}) \rbrace_{i=1}^{n} \}$
        \State $\mathcal{Q} \gets \tilde{s}$
        \State Initialize $I^*$ and $\pi*$
        \While {$\mathcal{Q} \neq \varnothing$}
            \State $\tilde{s} \gets \argmax_{\tilde{s} \in \mathcal{Q}}(f(\tilde{s}))$
            \If {$f(\tilde{s}) \leq I^*$}
                \State Return $\pi^*$            
            \ElsIf {$t(\tilde{s}) \geq \delta \And I^* \leq f(\tilde{s})$}
                \State $I^* \gets f(\tilde{s})$
                \State Update $\pi^*$
            \ElsIf {$t(\tilde{s}) < \delta$}    
                \ForAll{ordered children $c$ of $\tilde{s} \in \cross_{a_j \in \lambda_k} \mathcal{A}_j$}
                    \State $t' \gets t(\tilde{s}) + 1$
                    \If {$I^* \leq g(\tilde{s}) + h(c)$}
                            \State $\tilde{c} \gets \{ \tilde{y}^{0:t'}, \lbrace p(u_i \mid \tilde{y}^{0:t'}) \rbrace_{i=1}^{n},$ \newline \hspace*{6.1em} $ \lbrace p(x_i \mid \tilde{y}^{0:t'}) \rbrace_{i=1}^{n} \}$
                            \State $\mathcal{Q} \gets \mathcal{Q} \cup \{ \tilde{c} \}$            
                    \EndIf
                \EndFor
            \EndIf
        \EndWhile
    \end{algorithmic}
\end{algorithm}



                    

\section{Experiments} \label{experiments}

Our experiments\footnote{Code is available at https://gitlab.com/mit-mers/info-mapf-public.git} address the following questions to evaluate the effectiveness of our approach. 

\begin{itemize}[noitemsep, align=left]
\item[\textbf{Q1:}] Can our method demonstrate better performance in identifying the number of phenomena of interest compared to other methods?
\item[\textbf{Q2:}] Does the proposed approach address the issue of redundant observations effectively?
\item[\textbf{Q3:}] Does the suggested approach effectively utilize the proposed heuristic to avoid the computational complexities involved in computing coupled reward function?
\end{itemize}

As there are currently no established state-of-the-art approaches for addressing information-guided MAPF, we evaluate our approach against ablations in addition to a more involved approach that leverages MCTS. Specifically, we compare our method against Single-Agent Vulcan (SA-V), an approach where each agent plans its individual paths based on the algorithm outlined in \cite{ayton2017}, along with a derived version called Single-Agent Vulcan with Collision Avoidance (SA-V-CA) that includes the necessary collision avoidance check. Finally, we compare our approach with an MCTS-based variant of our proposed approach (MA-MCTS-V) where we estimate the reward from different actions based on random rollouts. Similar to Algorithm \ref{algo:overview}, it performs an MCTS-based search as opposed to using Algorithm \ref{algo:multi} to estimate the value of different branches of the search tree and takes decisions based on those evaluations. This attempts to maximize the multi-agent information gain directly, unlike the first two algorithms which reason over single-agent information gain.

The experiments utilized established MAPF benchmarks\footnote{https://movingai.com/benchmarks/mapf/index.html} \cite{mapf_benchmarks} and were extended to include two real-world scenarios derived from bathymetric maps. Tests were performed on the standard empty 16x16, empty 32x32, maze 32x32, and dense 65x81 maps, alongside real-life scenarios in East Boston Harbor and Galveston Bay based on NOAA surveys H10992 and H10638, respectively. In East Boston Harbor, the AUV navigated at a consistent depth of 15 meters, using 15-meter depth contours as obstacles. Similarly, in Galveston Bay, 2-meter depth contours determined obstacle boundaries for the AUV. Selected map examples are displayed in Figure \ref{fig:maps}.

Our experiments comprised 100 test runs each, featuring randomly positioned agents denoted by $|A|$ and simulated measurement fields containing up to $N$ target phenomena. These tests varied in mission duration $H$ and employed a planning horizon ($\delta$) of 2 and a communication range ($r$) of 5. For the simulated fields, we used unit mean functions and a kernel function $k(x, x') = \theta_1\exp{- (\Vert x - x' \Vert_2^2 / \theta_2^2)}$ to define the Gaussian Process ($\mathcal{GP}$). We adapted the realistic scenarios to our discrete action space; East Boston Harbor was discretized to cells of $0.0003^\circ$ in latitude and longitude, translating to a 25m step movement. Galveston Bay's discretization was set at $0.001^\circ$ per cell, yielding a step movement of approximately 100m. Information gain calculations for our experiments used parameters $\tilde{u} = 1.4$, $P_1 = 0.98$, $P_2 = 0.002$, $\sigma = 0.2$, with $\theta_1 = 0.4$ and $\theta_2 = 0.01$ for MAPF benchmarks, and $\theta_1 = 1.25$, $\theta_2 = 4*d^{\circ}$ for realistic scenarios, where $d^{\circ}$ indicates the cell size. Parameters $\tilde{u}, P_1, P_2$ are required for calculating the phenomenon probability function, while $\sigma$ accounted for the measurement noise in $\mathcal{GP}$.



\begin{figure}
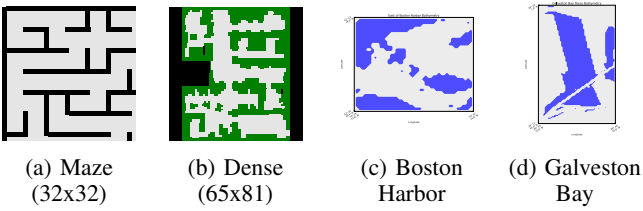

    \centering
    \captionsetup[subfigure]{justification=centering}
     \begin{subfigure}[b]{0.1\textwidth}
         \includesvg[width=\textwidth]{maze-32-32-4.svg}
         \caption{Maze (32x32)}
     \end{subfigure}
     \hfill
     \begin{subfigure}[b]{0.1\textwidth}
         \centering
         \includesvg[width=\textwidth]{den312d.svg}
         \caption{Dense (65x81)}
     \end{subfigure}
     \hfill
     \begin{subfigure}[b]{0.115\textwidth}
         \centering
         \includesvg[width=\textwidth]{boston_harbor.svg}
         \caption{Boston Harbor}
     \end{subfigure}
    \begin{subfigure}[b]{0.115\textwidth}
         \centering
         \includesvg[width=\textwidth]{galveston_bay.svg}
         \caption{Galveston Bay}
     \end{subfigure} 
     \caption{Visualizations of MAPF Maps and Realistic Scenarios}
    \label{fig:maps}
\end{figure}



\begin{figure}[h]
    \centering  
    \includesvg[width=0.55\textwidth]{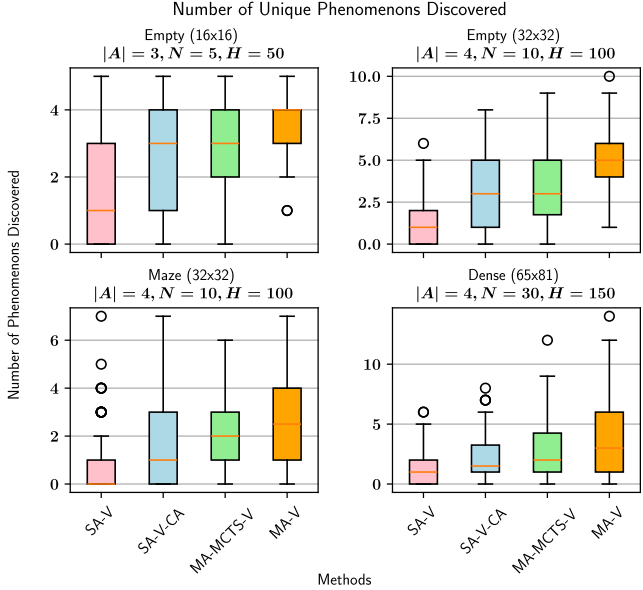}
    \caption{Total number of unique phenomena discovered by all agents on the MAPF maps. On average, our algorithm locates more phenomenon across all maps.}
    \label{fig:phenomenons_discovered_mapf}
\end{figure}

\begin{figure}[h]
    \centering  
    \includesvg[width=0.55\textwidth]{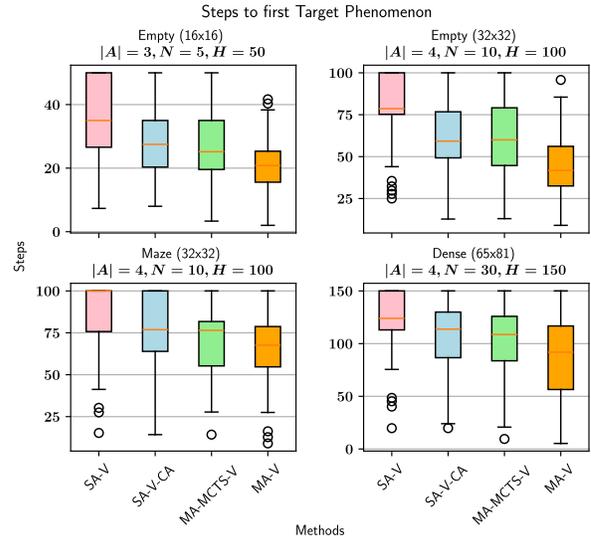}
    \caption{Average number of steps until each agent locates its first unique phenomenon on the MAPF maps. On average, each agent finds new phenomena faster using our algorithm.}
    \label{fig:step_to_first_gp_mapf}
\end{figure}

\begin{figure}[h]
    \centering  
    \includesvg[width=0.55\textwidth]{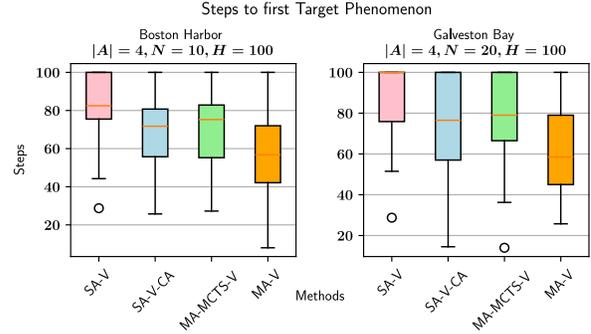}
    \caption{Average number of steps until each agent locates its first unique phenomenon on real bathymetry datasets. On average, each agent finds unique phenomena faster using our algorithm.}
    \label{fig:step_to_first_gp_real}
\end{figure}

\begin{figure}[h]
    \centering  
    \includesvg[width=0.55\textwidth]{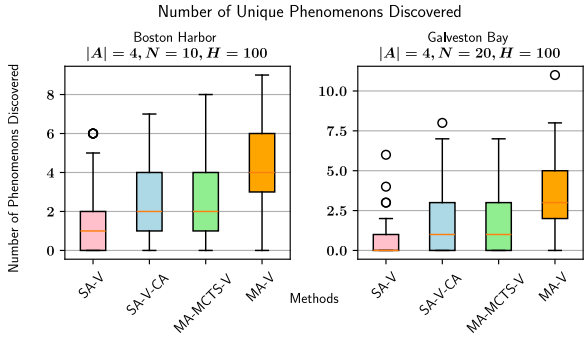}
    \caption{Total number of unique phenomena discovered by all agents on real bathymetry datasets. On average, our algorithm locates more phenomenon across all maps.}
    \label{fig:phenomenons_discovered_real}
\end{figure}

\begin{figure}[h]
    \centering
    \scalebox{1}{
    \includesvg[width=0.55\textwidth]{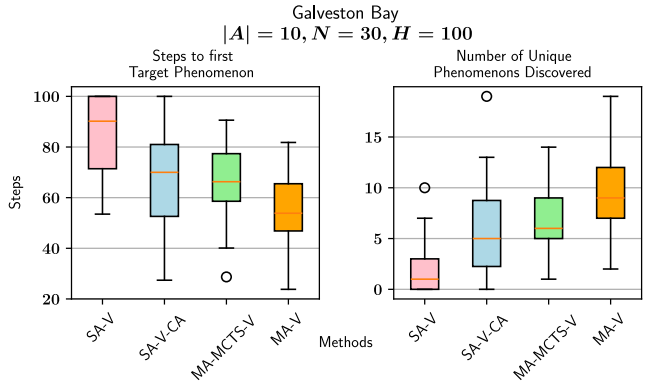}    
    }
    \caption{Scalability experiment over 50 test runs on Galveston Bay with larger number of agents and number of phenomenons.}
    \label{fig:scalability}
\end{figure}

\begin{figure}[h]
    \centering
    \scalebox{1}{
    \includesvg[width=0.45\textwidth]{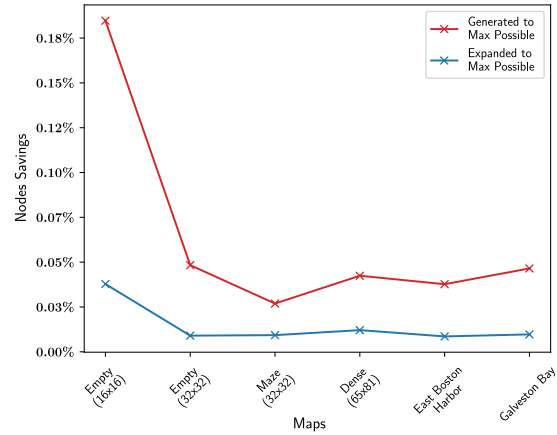}    
    }
    \caption{Ratio of number of A* search states generated and expanded compared to the maximum possible search states.}
    \label{fig:ratio_to_max}
\end{figure}

Figure \ref{fig:phenomenons_discovered_mapf} addresses \textbf{Q1}, showing that across different MAPF maps, our proposed approach successfully discovers more phenomena of interest within the same mission duration while avoiding collisions with other agents. A similar trend is also observed when we ran our approach on the real-world bathymetry datasets as shown in Figure \ref{fig:phenomenons_discovered_real}. Figure \ref{fig:step_to_first_gp_mapf} addresses \textbf{Q2}, demonstrating that agents utilizing our approach encounter their first unique phenomenon of interest sooner compared to the alternative methods in MAPF maps. This suggests that agents effectively leverage observations from their counterparts to explore different map areas thereby leading to efficient map exploration. Similar trends were also observed on the real-world bathymetry datasets as well as shown in Figure \ref{fig:step_to_first_gp_real}. Motivated by realistic scenarios, figure \ref{fig:scalability} showcases the performance of the proposed method compared to the baselines upon scaling the number of agents and phenomenons. It can be seen that when agents utilize our approach, the performance improvement is maintained regardless of the scale of the problem at hand. Figure \ref{fig:ratio_to_max} addresses \textbf{Q3}, illustrating that our approach generates and expands only a fraction of the maximum possible search states, particularly noticeable in larger maps including real-world bathymetry datasets where this ratio approaches zero. This indicates that our approach efficiently decides optimal paths for agents within communication range, leveraging our proposed heuristic to significantly accelerate computation by minimizing the need for complex, compute-intensive coupled rewards.


To further validate the efficacy of our approach, we conducted experiments on real hardware involving multiple Turtlebots navigating an enclosed space with simulated measurement fields \footnote{\url{https://info-mapf-mers.csail.mit.edu}}.

\section{Conclusion and Future Work}

In conclusion, this paper presents Multi-Agent Vulcan, a novel approach designed for information-guided multi-agent path finding problem where multiple agents are tasked to identify as many phenomena of interest as possible within a limited mission duration. We pose this as a receding horizon MA-POMDP problem in a limited communication setting. By decoupling multi-agent search into multiple single-agent search like MAPF we define an admissible heuristic in the reward space that allows us to leverage informed search methods like A* to find optimal collision-free paths for the agents. We compare our approach against existing adaptive sampling methods inspired by \cite{ayton2017} over multiple MAPF maps and realistic scenarios derived from existing bathymetry datasets. We further validate the advantage of our approach on real-hardware testbeds that used a team of turtlebots to navigate a given environment with simulated measurement fields.

Our method demonstrates significant improvements in our experiments, however a primary challenge remains the compute-intensive estimation of the expected multi-agent information gain ($g(s)$), especially as the number of agents increases. Future work focuses on formulating an efficient estimator for $g(s)$, estimating it through sample-based methods instead of using exact computation. An efficient estimator would result in significant speed up, and allow us to apply our work to even larger multi-agent groups.

\addtolength{\textheight}{-12cm}
\bibliographystyle{./IEEETransactions}
\bibliography{IEEEReferences}

\end{document}